\newtheorem{lemma}{Lemma}
\newtheorem{corollary}{Corollary}
\author{Frank Nielsen,~\IEEEmembership{Senior Member,~IEEE}
and Richard Nock,~\IEEEmembership{Nonmember}
\thanks{Frank Nielsen is with Sony Computer Science Laboratories, Inc., 3-14-13 Higashi Gotanda, 141-0022 Shinagawa-ku, Tokyo, Japan, {\tt nielsen@csl.sony.co.jp}}%
\thanks{Richard Nock is with UAG CEREGMIA, Martinique, France, {\tt rnock@martinique.univ-ag.fr}.}
}
\def\E{\mathcal{E}}
\def\X{\mathcal{X}}
\def\KL{\mathrm{KL}}
\def\dx{\mathrm{d}x}
\def\inner#1#2{\langle {#1}, {#2}\rangle}
\def\dnu{\mathrm{d}\nu}
\def\calX{\mathcal{X}}
\def\calE{\mathcal{E}}
\def\dxi{\mathrm{d}\xi}
\def\dim{\mathrm{dim}}
\def\dnu{\mathrm{d}\nu}
\def\Poi{\mathrm{Poi}}
\def\Nor{\mathrm{Nor}}
\title{On the Chi square and  higher-order Chi distances for approximating $f$-divergences}
\begin{document}
\maketitle

\IEEEcompsoctitleabstractindextext{%
\begin{abstract}
We report  closed-form formula for calculating the Chi square and higher-order Chi  distances between statistical distributions belonging to the same exponential family with affine natural space, and instantiate those formula for the Poisson and isotropic Gaussian families.
We then describe an analytic formula for the $f$-divergences based on Taylor expansions and relying on an extended class of Chi-type distances.  
\end{abstract}

\begin{IEEEkeywords}
statistical divergences, chi square distance, Kullback-Leibler divergence, Taylor series, exponential families.
\end{IEEEkeywords}
}

\maketitle 

\IEEEdisplaynotcompsoctitleabstractindextext
\IEEEpeerreviewmaketitle

\section{Introduction}

\subsection{Statistical divergences: $f$-divergences}

Measuring the similarity or {\em dissimilarity} between two probability measures is met ubiquitously in signal processing.
Some usual distances  are the Pearson $\chi^2_P$  and Neyman $\chi^2_N$ chi square distances, and the Kullback-Leibler divergence~\cite{ct-1991} defined respectively by:
\begin{eqnarray}
\chi^2_P(X_1:X_2) &=& \int \frac{(x_2(x)-x_1(x))^2}{x_1(x)} \dnu(x),\\ 
\chi^2_N(X_1:X_2) &=& \int \frac{(x_1(x)-x_2(x))^2}{x_2(x)} \dnu(x),\\
\KL(X_1:X_2) &=& \int x_1(x) \log \frac{x_1(x)}{x_2(x)} \dnu(x),
\end{eqnarray}
where $X_1$ and $X_2$ are probability measures absolutely continuous with respect to a reference measure $\nu$, and $x_1$ and $x_2$ denote their  Radon-Nikodym densities, respectively.
Those dissimilarity measures $M$ are termed {\em divergences} to contrast with metric distances since they are oriented distances ({\it i.e.}, $M(X_1:X_2)\not = M(X_2:X_1)$) that do not satisfy the triangular inequality.
In the 1960's, many of those divergences were unified using the generic framework of $f$-divergences~\cite{Amari-2009,Csiszar-2013}, $I_f$, defined for an arbitrary functional $f$:
 
\def\dom{\mathrm{dom}}
\begin{equation}\label{eq:fdiv}
I_f(X_1:X_2) = \int x_1(x) f\left(\frac{x_2(x)}{x_1(x)}\right) \dnu(x) \geq 0,
\end{equation} 
where $f$ is a convex function $f: (0,\infty)\subseteq \dom(f) \mapsto [0,\infty]$ such that $f(1)=0$).
Indeed, it follows from Jensen inequality that $I_f(X_1:X_2)\geq f(\int x_2(x)\dnu(x))=f(1)=0$.
Furthermore,  wlog., we may consider $f'(1)=0$ and fix the scale of divergence by setting $f''(1)=1$, see~\cite{Amari-2009}.
Those $f$-divergences\footnote{Beware that sometimes  the $\chi^2_N$ and $\chi^2_P$ definitions are inverted in the literature. This may stem from an alternative definition of $f$-divergences defined as 
$I_f'(X_1:X_2) = \int x_2(x) f(\frac{x_1(x)}{x_2(x)}) \dnu(x)=I_f(X_2:X_1)$.} can always be symmetrized by taking
$S_f(X_1:X_2)= I_f(X_1:X_2) + I_{f^*}(X_1:X_2)$,
 with $f^*(u)= u f(1/u)$, and $I_{f^*}(X_1:X_2)=I_f(X_2:X_1)$.
See Table~\ref{tab:fdiv} for a list of common $f$-divergences with their corresponding generators $f$.
In information theory, $f$-divergences are characterized as the {\em unique} family of convex separable~\cite{Amari-2009} divergences that satisfies the {\em information monotonicity} property~\cite{informationgeometry-2000}.



Note that $f$-divergences may evaluate to infinity (that is, {\em unbounded} $I_f$) when the integral diverge, even if $x_1,x_2>0$ on the support $\mathcal{X}$.
For example, let $\X=(0,1)$ be the unit interval, and two densities (with respect to Lebesgue measure $\nu_L$) $x_1(x)=1$ and $x_2(x)=c e^{-1/x}$ with $c^{-1}=\int_0^1 e^{-1/x}\dx \simeq 0.148$ the normalizing constant. Consider the Kullback-Leibler divergence ($f$-divergence with $f(u)=u\log u$):
$\KL(X_1:X_2) =  \int_0^1 x_1 \log \frac{x_1(x)}{x_2(x)} \dnu_L(x)=-\log c + \int_0^1 \frac{1}{x}\dnu(x) = \infty$.

\begin{table*}
$$
\begin{array}{lll}
\text{Name of the $f$-divergence } & \text{Formula $I_f(P:Q)$} & \text {Generator $f(u)$ with $f(1)=0$}\\
\hline\hline
\text{Total variation (metric)} & \frac{1}{2}\int |p(x)-q(x)| \dnu(x) & \frac{1}{2} |u-1| \\
\text{Squared Hellinger} & \int (\sqrt{p(x)}-\sqrt{q(x)})^2 \dnu(x) & (\sqrt{u}-1)^2\\
\text{Pearson $\chi^2_P$}  &  \int \frac{(q(x)-p(x))^2}{p(x)} \dnu(x) & (u-1)^2\\
\text{Neyman $\chi^2_N$}  &  \int \frac{(p(x)-q(x))^2}{q(x)} \dnu(x) & \frac{(1-u)^2}{u}\\
\text{Pearson-Vajda $\chi^k_{P}$}  &   \int \frac{(q(x)-\lambda p(x))^k}{p^{k-1}(x)} \dnu(x)  & (u-1)^k\\
\text{Pearson-Vajda $|\chi|^k_{P}$}  &   \int \frac{|q(x)-\lambda p(x)|^{k}}{p^{k-1}(x)} \dnu(x)  & |u-1|^k\\
%
%
\text{Kullback-Leibler} & \int p(x)\log \frac{p(x)}{q(x)} \dnu(x) & -\log u\\
\text{reverse Kullback-Leibler} & \int q(x)\log \frac{q(x)}{p(x)} \dnu(x) & u\log u \\
\text{$\alpha$-divergence} &  \frac{4}{1-\alpha^2} (1-\int p^{\frac{1-\alpha}{2}}(x) q^{1+\alpha}(x) \dnu(x))  & \frac{4}{1-\alpha^2}(1-u^{\frac{1+\alpha}{2}})\\ 
\text{Jensen-Shannon} & \frac{1}{2}\int (p(x)\log \frac{2p(x)}{p(x)+q(x)} +  q(x)\log \frac{2q(x)}{p(x)+q(x)})\dnu(x) &  -(u+1)\log \frac{1+u}{2} + u\log u\\
\end{array}
$$
 
\caption{Some common $f$-divergences $I_f$ with corresponding generators: Except the total variation, $f$-divergences are not metric~\cite{tvfuniquemetric-2007}. \label{tab:fdiv}}
\end{table*}
 


\subsection{Stochastic  approximations of $f$-divergences}
To bypass the integral evaluation of $I_f$ of Eq.~\ref{eq:fdiv} (often mathematically intractable),  we carry out a stochastic integration:
\begin{equation}\label{eq:sto}
\widehat{I_f}(X_1:X_2) \sim \frac{1}{2n} \sum_{i=1}^n \left(f\left(\frac{x_2(s_i)}{x_1(s_i)}\right) +\frac{x_1(t_i)}{x_2(t_i)} f\left(\frac{x_2(t_i)}{x_1(t_i)}\right)\right),\end{equation}
with $s_1, ..., s_n$ and $t_1, ..., t_n$ IID. sampled from $X_1$ and $X_2$, respectively.
Those approximations, although converging to the true values when $n\rightarrow\infty$, are time consuming and yield poor results in practice, specially when the
dimension of the observation space, $\mathcal{X}$, is large. 
We therefore concentrate on obtaining exact or arbitrarily fine approximation formula for $f$-divergences by considering a restricted class of exponential families.

\subsection{Exponential families}

Let $\inner{x}{y}$ denote the inner product for $x,y\in\calX$: The inner product for vector spaces $\calX$ is the scalar product $\inner{x}{y}=x^\top y$.
An exponential family~\cite{BrownExpFam:1986}  is a set of probability measures $\E_F=\{P_\theta\}_\theta$ dominated by a measure $\nu$ having their Radon-Nikodym densities $p_\theta$ expressed canonically as:

\begin{equation}\label{eq:cf}
p_\theta(x) = \exp (\inner{t(x)}{\theta}-F(\theta)+k(x) ),
\end{equation}

\noindent for $\theta$ belonging to the {\em natural parameter space}:   
$\Theta = \left\{\theta\in \mathbb{R}^D \middle| \int  p_\theta(x) \dnu(x) =1 \right\}$.
Since $\log \int_{x\in\calX} p_\theta(x) \dnu(x)=\log 1=0$, it follows that
$F(\theta)=-\log \int  \exp (\inner{t(x)}{\theta}+k(x) )\dnu(x)$.
For full regular families~\cite{BrownExpFam:1986}, it can be proved that function $F$ is strictly convex and differentiable over the open convex set $\Theta$. Function $F$ characterizes the family, and bears different names in the literature (partition function,  log-normalizer or   cumulant function)
and parameter $\theta$  (natural parameter)  defines the member $P_\theta$ of the family $\E_F$.
Let $D=\dim(\Theta)$ denote the dimension of $\Theta$, the order of the family.
The map $k(x): \calX \rightarrow \mathbb{R}$ is an auxiliary function defining a carrier measure $\xi$ with $\dxi(x)=e^{k(x)} \dnu(x)$.
In practice, we often consider  the Lebesgue measure $\nu_L$  defined over the Borel $\sigma$-algebra $\calE=B(\mathbb{R}^d)$ of $\mathbb{R}^d$ for continuous distributions (e.g., Gaussian), or
the counting measure $\nu_c$ defined  on the power set $\sigma$-algebra $\calE=2^\calX$ for discrete distributions (e.g., Poisson or multinomial families).
The term $t(x)$ is a measure mapping called the sufficient statistic~\cite{BrownExpFam:1986}.
Table~\ref{tab:affspace} shows the canonical decomposition for the Poisson and isotropic Gaussian families.
Notice that   the Kullback-Leibler divergence between members $X_1\sim \E_F(\theta_1)$ and $X_2\sim \E_F(\theta_2)$ of the same exponential family amount to compute a Bregman divergence on swapped natural parameters~\cite{2011-brbhat}: $\KL(X_1:X_2)=B_F(\theta_2:\theta_1)$, where $B_F(\theta:\theta')=F(\theta)-F(\theta')-(\theta-\theta')^\top \nabla F(\theta')$, where $\nabla F$ denotes the gradient. 
 
\section{$\chi^2$ and higher-order $\chi^k$  distances}
%

\subsection{A closed-form formula}
When $X_1$ and $X_2$ belong to the same restricted exponential family $\E_F$, we obtain the following result:

\begin{lemma}\label{lemma1}
The Pearson/Neyman Chi square distance between $X_1\sim \E_F(\theta_1)$ and $X_2\sim \E_F(\theta_2)$ is given by:
\begin{eqnarray}
\chi_P^2(X_1:X_2) &=& e^{F(2\theta_2-\theta_1)- (2F(\theta_2)-F(\theta_1))}-1,\\
\chi_N^2(X_1:X_2) &=& e^{F(2\theta_1-\theta_2)- (2F(\theta_1)-F(\theta_2))}-1,
\end{eqnarray}
provided that $2\theta_2-\theta_1$ and $2\theta_1-\theta_2$ belongs to the natural parameter space $\Theta$.
\end{lemma}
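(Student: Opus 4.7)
The plan is to reduce both $\chi^2$ integrals to an integral of the form $\int p_{\theta_2}^2/p_{\theta_1}\,\dnu$ and then recognize the integrand, up to a multiplicative constant, as the density of another member of the exponential family $\E_F$. First, I would expand the square in the Pearson definition,
\[
\chi_P^2(X_1:X_2) = \int \frac{(x_2(x)-x_1(x))^2}{x_1(x)}\dnu(x) = \int \frac{x_2(x)^2}{x_1(x)}\dnu(x) - 2\int x_2(x)\dnu(x) + \int x_1(x)\dnu(x),
\]
and use the fact that both $x_1$ and $x_2$ are probability densities to simplify the last two terms to $-2+1=-1$. So the task reduces to computing $\int p_{\theta_2}^2/p_{\theta_1}\,\dnu$.

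Next, I would substitute the canonical form \eqref{eq:cf} and collect exponents. The carrier term $k(x)$ appears as $2k(x)-k(x)=k(x)$, and the natural-parameter inner product appears as $\langle t(x), 2\theta_2-\theta_1\rangle$. Pulling out the $F$-terms, which do not depend on $x$, gives
\[
\frac{p_{\theta_2}(x)^2}{p_{\theta_1}(x)} = \exp\bigl(\langle t(x), 2\theta_2-\theta_1\rangle + k(x)\bigr)\, e^{-(2F(\theta_2)-F(\theta_1))}.
\]
Under the hypothesis that $2\theta_2-\theta_1\in\Theta$, the inner exponential is precisely $p_{2\theta_2-\theta_1}(x)\, e^{F(2\theta_2-\theta_1)}$, so after integrating against $\dnu$ and using $\int p_{2\theta_2-\theta_1}\dnu=1$, I obtain
\[
\int \frac{p_{\theta_2}(x)^2}{p_{\theta_1}(x)}\dnu(x) = e^{F(2\theta_2-\theta_1)-(2F(\theta_2)-F(\theta_1))}.
\]
Subtracting $1$ yields the claimed closed form for $\chi_P^2$. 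The Neyman formula follows at once from the identity $\chi_N^2(X_1:X_2)=\chi_P^2(X_2:X_1)$, provided symmetrically that $2\theta_1-\theta_2\in\Theta$.

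The only delicate point is the assumption that $2\theta_2-\theta_1$ (respectively $2\theta_1-\theta_2$) lies in $\Theta$; without this, the substitution that turns the shifted exponential into a normalized density of $\E_F$ is not valid, and indeed the integral may diverge. This is exactly where the "affine natural space" hypothesis of the paper is used, so I would state it explicitly as the standing assumption and note that under it the manipulation is simply an application of the normalization identity that defines $F$.
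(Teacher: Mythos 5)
Your proposal is correct and follows essentially the same route as the paper: expand the square to reduce $\chi_P^2$ to $\int x_2^2/x_1\,\dnu - 1$, then evaluate that integral by recognizing the normalized density $p_{2\theta_2-\theta_1}$ (the paper factors this step out as its Lemma~\ref{lemma:int} with $p=-1$, $q=2$, whereas you inline the same computation), and obtain the Neyman case from $\chi_N^2(X_1:X_2)=\chi_P^2(X_2:X_1)$.
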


This implies that the chi square distances are always bounded. 
The proof relies on the following lemma:

\begin{lemma}\label{lemma:int}
The integral $I_{p,q}=\int  x_1(x)^p x_2(x)^q \dnu(x)$ with $p+q=1$ for $X_1\sim \E_F(\theta_1)$ and $X_2\sim \E_F(\theta_2), p\in\mathbb{R}, p+q=1$ converge and equals to: 
\begin{equation}
I_{p,q}= e^{F(p\theta_1+q\theta_2)-(pF(\theta_1)+qF(\theta_2))}
\end{equation}
provided the natural parameter space $\Theta$ is {\em affine}.
\end{lemma}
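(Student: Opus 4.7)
The plan is to substitute the canonical exponential-family form from Eq.~\ref{eq:cf} into the integrand and recognize the result as an unnormalized density whose natural parameter is the affine combination $p\theta_1 + q\theta_2$. The affineness of $\Theta$ is exactly what guarantees this combination is a legitimate natural parameter even when $p \notin [0,1]$ (e.g., $p = 2, q = -1$ as needed for the Chi square applications).

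First I would write
\begin{equation*}
x_1(x)^p x_2(x)^q = \exp\bigl( p(\inner{t(x)}{\theta_1} - F(\theta_1) + k(x)) + q(\inner{t(x)}{\theta_2} - F(\theta_2) + k(x)) \bigr),
\end{equation*}
and use $p + q = 1$ together with bilinearity of the inner product to collapse this to
\begin{equation*}
x_1(x)^p x_2(x)^q = \exp\bigl( \inner{t(x)}{p\theta_1 + q\theta_2} + k(x) - (pF(\theta_1) + qF(\theta_2)) \bigr).
\end{equation*}
Next I would multiply and divide by $\exp(F(p\theta_1 + q\theta_2))$ so that the integrand becomes the exponential-family density $p_{p\theta_1+q\theta_2}(x)$ with a constant prefactor:
\begin{equation*}
x_1(x)^p x_2(x)^q = e^{F(p\theta_1+q\theta_2) - (pF(\theta_1) + qF(\theta_2))} \, p_{p\theta_1+q\theta_2}(x).
\end{equation*}

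Integrating both sides against $\nu$, the prefactor is constant and $\int p_{p\theta_1+q\theta_2}(x)\,\dnu(x) = 1$ by definition of the natural parameter space, yielding the claimed closed form.

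The main subtlety, and the only place the affineness hypothesis is used, is justifying that $p\theta_1 + q\theta_2 \in \Theta$ so that $F(p\theta_1 + q\theta_2)$ is well-defined and the corresponding density indeed integrates to one. Since $p + q = 1$, the point $p\theta_1 + q\theta_2$ is an affine (not merely convex) combination of $\theta_1, \theta_2 \in \Theta$; when $\Theta$ is affine it is closed under such combinations, and convergence of the integral $\int \exp(\inner{t(x)}{p\theta_1 + q\theta_2} + k(x))\,\dnu(x)$ follows from membership in $\Theta$. This is also where one sees that the lemma can fail when $\Theta$ is only convex but not affine, which is why the restriction is stated.
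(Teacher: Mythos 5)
Your proof is correct and follows essentially the same route as the paper: substitute the canonical form of Eq.~\ref{eq:cf}, use $p+q=1$ to collapse the exponent to the natural parameter $p\theta_1+q\theta_2$, factor out $e^{F(p\theta_1+q\theta_2)-(pF(\theta_1)+qF(\theta_2))}$, and integrate the resulting normalized density to one. Your added remark on why affineness of $\Theta$ is needed (the combination is affine but not convex when $p\notin[0,1]$) is a welcome elaboration of a point the paper leaves implicit.
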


\begin{proof}
Let us calculate the integral $I_{p,q}$:
\begin{eqnarray*}
&=& \int \exp (p(\inner{t(x)}{\theta_1}-F(\theta_1)+k(x))) \\
&& \times \exp (q(\inner{t(x)}{\theta_2}-F(\theta_2)+k(x))) \dnu(x),\\
&=& \int e^{\inner{t(x)}{p\theta_1+q\theta_2}-(pF(\theta_1)+qF(\theta_2))+k(x)}  \dnu(x),\\
&=& e^{F(p\theta_1+q\theta_2) - (pF(\theta_1)+qF(\theta_2))} \int  p_F(x| p\theta_1+q\theta_2) \dnu(x).
\end{eqnarray*}
When $p\theta_1+q\theta_2\in\Theta$, we have $\int  p_F(x|p\theta_1+q\theta2) \dnu(x)=1$, hence the result.
\end{proof}

To prove Lemma~\ref{lemma1}, we rewrite
$\chi^2_P(X_1:X_2) =\int (\frac{x_2^2(x)}{x_1(x)}-2x_2(x)+x_1(x)) \dnu(x) = \left(\int  x_1(x)^{-1}x_2(x)^2\dnu(x)\right)-1$,
and apply Lemma~\ref{lemma:int} for $p=-1$ and $q=2$ (checking that $p+q=1$). The closed-form formula for the Neyman chi square follows from the fact that $\chi^2_N(X_1:X_2)=\chi^2_P(X_2:X_1)$.
Thus when the natural parameter space $\Theta$ is affine, the Pearson/Neyman Chi square distances and its symmetrization $\chi_P^2+\chi_N^2$ between members of the same exponential family are available in closed-form.
Examples of such families are the Poisson, binomial, multinomial, or isotropic Gaussian families to name a few.
Let us call those families: {\em affine exponential families} for short.
The canonical decomposition of usual affine exponential families are reported in Table~\ref{tab:affspace}.
Note that a formula for the $\alpha$-divergences between members of the same exponential family were reported in~\cite{2011-brbhat} for $\alpha\in[0,1]$:
In that case, $\alpha\theta_1+(1-\alpha)\theta_2$ always belong to the open convex natural space $\Theta$ (here, $p$ belongs to $\mathbb{R}$).

\begin{table}

\begin{eqnarray*}
\Poi(\lambda) &:& p(x|\lambda)=\frac{\lambda^x e^{-\lambda}}{x!}, \lambda>0, x\in\{0, 1, ...\}\\
\Nor_I(\mu) &:& p(x|\mu)= (2\pi)^{-\frac{d}{2}} e^{-\frac{1}{2}(x-\mu)^\top (x-\mu)}, \mu\in\mathbb{R}^d, x\in\mathbb{R}^d\\
\end{eqnarray*}

$$
\begin{array}{|l||l|l|l|l|l|l|}\hline
\mathrm{Family} & \theta & \Theta & F(\theta) & k(x) & t(x) & \nu \\ \hline\hline
\mathrm{Poisson} & \log\lambda & \mathbb{R} & e^\theta & -\log x! & x & \nu_c\\
\mathrm{Iso. Gaussian} & \mu & \mathbb{R}^d & \frac{1}{2}\theta^\top\theta & \frac{d}{2}\log 2\pi-\frac{1}{2}x^\top x & x & \nu_L\\ \hline
\end{array}
$$

\caption{Examples of exponential families with affine natural space $\Theta$.
$\nu_c$ denotes the counting measure and $\nu_L$ the Lebesgue measure.
\label{tab:affspace}}
\end{table}
 
\subsection{The Poisson and isotropic Gaussian cases} 
 
As reported in Table~\ref{tab:affspace}, those Poisson and isotropic Gaussian exponential families have affine natural parameter spaces $\Theta$.

\begin{itemize}
\item The Poisson family.
For $P_1\sim \Poi(\lambda_1)$ and $P_2\sim \Poi(\lambda_2)$, we have:

\begin{equation}\label{eq:poi1}
\chi_P^2(\lambda_1:\lambda_2)=\exp \left(\frac{\lambda_2^2}{\lambda_1}-2\lambda_2+\lambda_1 \right) -1.
\end{equation}

To illustrate this formula with a numerical example, consider $X_1\sim \Poi(1)$ and $X_2\sim \Poi(2)$. Then, it comes that
$\chi_P^2(P_1:P_2)=e-1 \simeq 1.718$.

\item The isotropic Normal family.
For $N_1\sim \Nor_I(\mu_1)$ and $N_2\sim \Nor_I(\mu_2)$, we have according to Table~\ref{tab:affspace}:
$\chi_P^2(\mu_1: \mu_2)= e^{\frac{1}{2}(2\mu_2-\mu_1)^\top(2\mu_2-\mu_1) -(\mu_2^\top\mu_2-\frac{1}{2}\mu_1^\top\mu_1)} -1$.
In that case the $\chi^2$ distance is symmetric:
\begin{equation}\label{eq:nor1}
\chi_P^2(\mu_1: \mu_2)=e^{(\mu_2-\mu_1)^\top (\mu_2-\mu_1)} -1 =\chi^2_N(\mu_1:\mu_2)
\end{equation}

\end{itemize}

\subsection{Extensions to higher-order Vajda $\chi^k$ divergences}
The higher-order Pearson-Vajda $\chi^k_P$ and $|\chi^k_P|$ distances~\cite{CsiszarTaylor-2002} are defined by:
\begin{eqnarray}
\chi^k_P(X_1:X_2) &=& \int \frac{(x_2(x)-x_1(x))^k}{x_1(x)^{k-1}} \dnu(x),\\
|\chi|^k_P(X_1:X_2) &=& \int \frac{|x_2(x)-x_1(x)|^k}{x_1(x)^{k-1}} \dnu(x),
\end{eqnarray}
 are $f$-divergences for the
generators $(u-1)^k$ and $|u-1|^k$ (with $|\chi|^k_P(X_1:X_2)\geq \chi^k_P(X_1:X_2)$).
When $k=1$, we have $\chi^1_P(X_1:X_2)=\int (x_1(x)-x_2(x)) \dnu(x)=0$ ({\it i.e.}, divergence is never discriminative),
and $|\chi_P^1|(X_1,X_2)$ is twice the total variation distance (the only metric $f$-divergence~\cite{tvfuniquemetric-2007}).
$\chi_P^0$ is the unit constant.
Observe that the $\chi_P^k$ ``distance'' may be negative for odd $k$ (signed distance), but not $|\chi|^k_P$.
We can compute the $\chi^k_P$ term explicitly by performing the binomial expansion:

\begin{lemma}
The (signed) $\chi^k_P$ distance between members  $X_1\sim \E_F(\theta_1)$ and $X_2\sim \E_F(\theta_2)$ of the same affine exponential family is ($k\in\mathbb{N}$) always bounded and equal to:
\begin{eqnarray}
\chi^k_P(X_1:X_2)= 
 \sum_{j=0}^k {(-1)}^{k-j} {k \choose j}  \frac{e^{F((1-j)\theta_1+j\theta_2)}}{e^{(1-j)F(\theta_1)+jF(\theta_2)}}.
\end{eqnarray}
\end{lemma}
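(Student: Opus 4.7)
The plan is to apply the binomial theorem to $(x_2(x)-x_1(x))^k$ inside the integrand, which converts a single integral of a $k$-th power into a finite sum of integrals of the geometric-mean form $\int x_1^p x_2^q \dnu$ with $p+q=1$, exactly the form handled by Lemma~\ref{lemma:int}.

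Concretely, I would first write
\begin{equation*}
(x_2(x)-x_1(x))^k = \sum_{j=0}^{k} \binom{k}{j} (-1)^{k-j} x_1(x)^{k-j} x_2(x)^{j},
\end{equation*}
then divide by $x_1(x)^{k-1}$ to obtain
\begin{equation*}
\frac{(x_2(x)-x_1(x))^k}{x_1(x)^{k-1}} = \sum_{j=0}^{k} (-1)^{k-j}\binom{k}{j}\, x_1(x)^{1-j} x_2(x)^{j}.
\end{equation*}
Interchanging the finite sum with the integral (justified below) yields
\begin{equation*}
\chi^k_P(X_1:X_2) = \sum_{j=0}^{k} (-1)^{k-j}\binom{k}{j} \int x_1(x)^{1-j} x_2(x)^{j}\,\dnu(x).
\end{equation*}
Each inner integral has the shape $I_{p,q}$ from Lemma~\ref{lemma:int} with $p=1-j$ and $q=j$, so $p+q=1$ as required, and applying the lemma gives
\begin{equation*}
\int x_1^{1-j} x_2^{j}\,\dnu(x) = \exp\!\bigl(F((1-j)\theta_1+j\theta_2) - ((1-j)F(\theta_1)+jF(\theta_2))\bigr),
\end{equation*}
which plugged back in produces precisely the claimed closed form.

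The main obstacle, and the only place the affine hypothesis on $\Theta$ is actually used, is ensuring that Lemma~\ref{lemma:int} can be invoked for every $j\in\{0,1,\ldots,k\}$. For $j\geq 2$ the exponent $p=1-j$ is negative, so the convex combination $(1-j)\theta_1+j\theta_2$ lies \emph{outside} the segment joining $\theta_1$ and $\theta_2$; it is only guaranteed to belong to $\Theta$ because $\Theta$ is affine (equivalently, a whole affine subspace of $\mathbb{R}^D$). Once membership is secured, $I_{p,q}$ is finite, the sum is a finite linear combination of finite quantities (hence bounded), and the exchange of sum and integral is automatic. Thus the affine assumption simultaneously handles both convergence and the boundedness claim stated in the lemma.
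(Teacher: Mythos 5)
Your proof follows essentially the same route as the paper's: a binomial expansion of $(x_2-x_1)^k$, division by $x_1^{k-1}$, term-by-term integration, and an application of Lemma~\ref{lemma:int} with $p=1-j$, $q=j$. Your added remark that the affine hypothesis on $\Theta$ is what guarantees $(1-j)\theta_1+j\theta_2\in\Theta$ for $j\geq 2$ (and hence boundedness) is correct and makes explicit a point the paper leaves implicit.
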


\begin{proof}
\begin{eqnarray}
&&\chi^k_P(X_1:X_2) = \int \frac{(x_2(x)-x_1(x))^k}{x_1(x)^{k-1}} \dnu(x),\\
&=& \int \sum_{j=0}^k (-1)^{k-j} {k \choose j} \frac{x_1(x)^{k-j} x_2(x)^{j}}{x_1(x)^{k-1}} \dnu(x),\\
&=&  \sum_{j=0}^k (-1)^{k-j} {k \choose j} \int x_1(x)^{1-j} x_2(x)^{j} \dnu(x).
\end{eqnarray}
Then the proof follows from Lemma~2 that shows that $I_{1-j,j}(X_1:X_2) =\int x_1(x)^{1-j} x_2(x)^{j} \dnu(x) =\frac{e^{F((1-j)\theta_1+j\theta_2)}}{e^{(1-j)F(\theta_1)+jF(\theta_2)}}$.
\end{proof}

For Poisson/Normal distributions, we get:

\begin{eqnarray}
\chi^k_P(\lambda_1:\lambda_2)&=&\sum_{j=0}^k (-1)^{k-j} {k \choose j} e^{\lambda_1^{1-j} \lambda_2^{j} - ((1-j)\lambda_1 +  j\lambda_2)},\\
\chi^k_P(\mu_1:\mu_2)&=&\sum_{j=0}^k (-1)^{k-j} {k \choose j} e^{\frac{1}{2}j(j-1)(\mu_1-\mu_2)^\top (\mu_1-\mu_2)}.
\end{eqnarray}

Observe that for $\lambda_1=\lambda_2=\lambda$, we have $\chi^k_P(\lambda_1:\lambda_2)=\sum_{j=0}^k (-1)^{k-j} {k \choose j}  e^{\lambda-\lambda}=(1-1)^k=0$ when $k\in\mathbb{N}$,
as expected. The $\chi^k_P$ value is always bounded.
For sanity check, consider the binomial expansion for $k=2$, we have:
$\chi^2_P(\lambda_1:\lambda_2)={2\choose 0} e^{\lambda_1-\lambda_1}-{2 \choose 1} e^{\lambda_2-\lambda_2}+{2\choose 2} e^{\frac{\lambda_2^2}{\lambda_1}-2\lambda_2}=e^{\frac{\lambda_2^2}{\lambda_1}-2\lambda_2}-1$, in accordance with Eq.~\ref{eq:poi1}.
Consider a numerical example: Let $\lambda_1=0.6$ and $\lambda_2=0.3$, then $\chi_P^2\sim 0.16$, $\chi_P^3\sim -0.03$, $\chi_P^4\sim 0.04$,
$\chi_P^5\sim -0.02$, $\chi_P^6\sim 0.018$, $\chi_P^7\sim -0.013$, $\chi_P^8\sim 0.01$, $\chi_P^9\sim -0.0077$, $\chi_P^{10}\sim 0.006$, etc.
This numerical example illustrates the alternating sign of those $\chi^k$-type signed distances.

\section{$f$-divergences from Taylor series}

Recall that the $f$-divergence defined for a generator $f$ is $I_f(X_1 : X_2) = \int x_1(x) f\left(\frac{x_2(x)}{x_1(x)}\right) \dnu(x)$.
Assuming $f$ analytic, we use the Taylor expansion about a point $\lambda$: $f(x)=f(\lambda)+ f'(\lambda)(x-\lambda)+\frac{1}{2}f''(\lambda)(x-\lambda)^2+...=\sum_{i=0}^{\infty} \frac{1}{i!} f^{(i)}(\lambda) (x-\lambda)^i$, the power series  expansion of $f$, for $\lambda\in \mathrm{int}(\dom(f^{(i)})) \forall i\geq 0$. 
 
\begin{lemma}[extends Theorem~1 of~\cite{CsiszarTaylor-2002}] 
When bounded, the $f$-divergence $I_f$ can be expressed as the power series of higher order Chi-type distances:
\begin{eqnarray}
I_f(X_1 : X_2) &=& \int x_1(x) \sum_{i=0}^{\infty} \frac{1}{i!}  f^{(i)}(\lambda) \left(\frac{x_2(x)}{x_1(x)}-\lambda\right)^i \dnu(x),\nonumber\\
&\stackrel{*}{=}& \sum_{i=0}^{\infty} \frac{1}{i!}  f^{(i)}(\lambda)\ \chi^i_{\lambda,P}(X_1:X_2),\label{eq:ftaylor}
\end{eqnarray}
\end{lemma}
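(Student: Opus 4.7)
The plan is to insert the Taylor series of $f$ about $\lambda$ directly into the defining integral $I_f(X_1:X_2)=\int x_1(x)\, f\!\left(x_2(x)/x_1(x)\right) \dnu(x)$, then swap the infinite sum with the integral, and finally read off each resulting integral as a $\lambda$-centered Chi-type distance. The first (unmarked) equality of the lemma is immediate once one notes that $f$ being analytic on an open set containing the range of $x_2(x)/x_1(x)$ gives a pointwise expansion $f(u)=\sum_{i\ge 0}\frac{1}{i!} f^{(i)}(\lambda)(u-\lambda)^i$ at $u=x_2(x)/x_1(x)$, which plugged in produces the stated double summation under the integral sign.

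For the starred equality, pulling the sum outside the integral produces the $i$-th term
\begin{equation*}
\frac{1}{i!} f^{(i)}(\lambda) \int x_1(x) \left(\frac{x_2(x)}{x_1(x)}-\lambda\right)^i \dnu(x),
\end{equation*}
and I would simply \emph{define} $\chi^i_{\lambda,P}(X_1:X_2)$ to be this inner integral; this is the natural $\lambda$-shifted analogue of the Pearson--Vajda divergence of Table~\ref{tab:fdiv}, since for $\lambda=1$ the algebraic identity $x_1(x)\bigl((x_2(x)-x_1(x))/x_1(x)\bigr)^i=(x_2(x)-x_1(x))^i/x_1(x)^{i-1}$ recovers the standard $\chi^i_P$.

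The main obstacle is justifying the interchange of the series and the integral, which is precisely where the boundedness hypothesis on $I_f$ enters. I would apply Fubini--Tonelli to the measurable functions
\begin{equation*}
g_i(x)=\frac{1}{i!} f^{(i)}(\lambda)\, x_1(x)\left(\frac{x_2(x)}{x_1(x)}-\lambda\right)^i,
\end{equation*}
after first controlling the absolute-value version by the termwise majorant $\sum_{i\ge 0}\frac{1}{i!}|f^{(i)}(\lambda)|\cdot|\chi|^i_{\lambda,P}(X_1:X_2)$. Boundedness of $I_f$ together with analyticity of $f$ on the range of $x_2/x_1$ ensures this majorant is finite, so Fubini--Tonelli legitimises the swap in the signed series, yielding Eq.~\ref{eq:ftaylor}. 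The remaining bookkeeping (measurability of the $g_i$ and termwise identification with $\chi^i_{\lambda,P}$) is routine, so the substantive content of the proof is exactly this absolute-convergence argument for the swap.
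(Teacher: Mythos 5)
Your proposal follows essentially the same route as the paper: substitute the pointwise Taylor expansion of $f$ about $\lambda$ into the defining integral, interchange the sum and the integral by Fubini (justified by the boundedness assumption on $I_f$), and identify each resulting term with the $\lambda$-shifted Pearson--Vajda distance $\chi^i_{\lambda,P}$. Your treatment of the interchange via an absolute-value majorant is, if anything, slightly more explicit than the paper's one-line appeal to Fubini, but the substance is identical.
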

In the $*$ equality, we  swapped the integral and sum according to Fubini theorem since 
we assumed that $I_f<\infty$, and $\chi^i_{\lambda,P}(X_1:X_2)$ is a generalization of the $\chi^i_{P}$ defined by:
\begin{equation}
 \chi^i_{\lambda,P}(X_1:X_2)=  \int  \frac{(x_2(x)-\lambda x_1(x))^i}{x_1(x)^{i-1}}  \dnu(x).
\end{equation} 
and $\chi^0_{\lambda,P}(X_1:X_2)=1$ by convention.
Note that $\chi^i_{\lambda,P}\geq f(1)=(1-\lambda)^k$ is a  $f$-divergence for $f(u)=(u-\lambda)^k-(1-\lambda)^k$ (convex for even $k$).
Eq.~\ref{eq:ftaylor} yields a meaningful numerical approximation scheme by truncating the series to the first $s$ terms, provided that the Taylor remainder is bounded.

\begin{itemize}

\item Choosing $\lambda=1\in\mathrm{int}(\dom (f^{(i)}))$, we approximate the $f$-divergence   as follows (Theorem~1 of~\cite{CsiszarTaylor-2002}):

\begin{eqnarray}\label{eq:fapprox}
\lefteqn{|I_f(X_1:X_2)-\sum_{k=0}^s \frac{f^{(k)}(1)}{k!} \chi^k_P(X_1:X_2)|}\nonumber\\ && \leq \frac{1}{(s+1)!} \|f^{(s+1)}\|_\infty (M-m)^s,
\end{eqnarray}
where $\|f^{(s+1)}\|_\infty=\sup_{t\in[m,M]} |f^{(s+1)}(t)|$ and $m\leq  \frac{p}{q}\leq M$.
Notice that by assuming the ``fatness'' of $\frac{p}{q}$, we ensure that $I_f<\infty$.

\item Choosing $\lambda=0$ (whenever $0\in \mathrm{int}(\dom(f^{(i)}))$) and affine exponential families,  we get the $f$-divergence   in a much simpler   analytic expression:

\begin{eqnarray}
I_f(X_1 : X_2) &=& \sum_{i=0}^{\infty} \frac{f^{(i)}(0)}{i!}   I_{1-i,i}(\theta_1:\theta_2),\\
I_{1-i,i}(\theta_1:\theta_2) &=& \frac{e^{F(i\theta_2+(1-i)\theta_1)}}{e^{i F(\theta_2)+(1-i)F(\theta_1)}}.
\end{eqnarray}
\end{itemize}

\begin{lemma}
The bounded $f$-divergences between members of the same affine exponential family
 can be computed as an equivalent power series whenever $f$ is analytic.
\end{lemma}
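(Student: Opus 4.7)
The plan is to assemble the final lemma directly from the two ingredients already on the table: the Taylor-series representation of $I_f$ given in Lemma~4 and the closed-form evaluation of $I_{p,q}$ given in Lemma~2 for affine exponential families. First I would fix an expansion point $\lambda$ in the interior of the common domain of all derivatives $f^{(i)}$; the two natural choices are $\lambda=1$, which reproduces the classical Csisz\'ar--Vajda expansion in the Chi-type distances $\chi^k_P$, and $\lambda=0$, which, for affine exponential families, collapses each term to a single $I_{1-i,i}$ rather than a binomial sum and thus gives the most compact closed form.

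The second step is to invoke Lemma~4 to write
\[
I_f(X_1:X_2) \;=\; \sum_{i=0}^{\infty} \frac{f^{(i)}(\lambda)}{i!}\, \chi^i_{\lambda,P}(X_1:X_2),
\]
and then to expand $(x_2(x)-\lambda x_1(x))^i$ inside the definition of $\chi^i_{\lambda,P}$ by the binomial theorem to reduce it to a finite linear combination of the integrals $I_{1-j,j}$. Since $X_1$ and $X_2$ lie in an exponential family whose natural parameter space $\Theta$ is affine, every point $(1-j)\theta_1+j\theta_2$ sits in $\Theta$, so Lemma~2 applies term by term and replaces each $I_{1-j,j}$ by the closed-form exponential $e^{F((1-j)\theta_1+j\theta_2)-(1-j)F(\theta_1)-jF(\theta_2)}$. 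Composing these substitutions gives an explicit power series in quantities built solely from $F$, $\theta_1$, $\theta_2$, and the derivatives $f^{(i)}(\lambda)$.

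The delicate step, and the main obstacle, is the exchange of summation and integration that is hidden in the $*$ equality of Lemma~4. Here the hypothesis that $I_f$ is bounded is exactly what is needed for a Fubini/Tonelli justification: analyticity of $f$ guarantees absolute convergence of the Taylor series on the range of the ratio $x_2/x_1$, and finiteness of $\int x_1(x) f(x_2(x)/x_1(x))\,\dnu(x)$ supplies the required integrable majorant. I would make this rigorous either by a dominated-convergence argument on partial sums, using the Taylor remainder bound of Eq.~\ref{eq:fapprox} when $x_2/x_1$ is confined to a compact interval $[m,M]$, or by routing through the absolute analogue $|\chi|^i_{\lambda,P}$ so that Tonelli's theorem applies before any cancellations. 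Once the interchange is secured, the rest of the argument is purely algebraic binomial bookkeeping, and the lemma follows.
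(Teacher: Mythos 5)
Your proposal is correct and follows essentially the same route as the paper, which obtains this lemma by combining the Taylor-series representation of $I_f$ (with the Fubini interchange justified by boundedness) with the closed-form evaluation of $I_{1-j,j}$ from Lemma~2 on an affine natural parameter space. Your treatment of the sum--integral interchange is in fact somewhat more careful than the paper's one-line appeal to Fubini, but it is the same argument.
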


\begin{corollary}
A second-order Taylor expansion yields $I_f(X_1:X_2)\sim f(1)+f'(1)\chi^1_N(X_1:X_2)+\frac{1}{2} f''(1)\chi^2_N(X_1:X_2)$.
Since $f(1)=0$ ($f$ can always be renormalized) and $\chi^1_N(X_1:X_2)=0$, it follows that 
\begin{equation}
I_f(X_1:X_2)\sim \frac{f''(1)}{2}\chi^2_N(X_1:X_2),
\end{equation} and reciprocally $\chi^2_N(X_1:X_2) \sim  \frac{2}{f''(1)} I_f(X_1:X_2)$ ($f''(1)>0$ follows from the strict convexity of the generator). When $f(u)=u\log u$, this yields the well-known approximation~\cite{ct-1991}:
\begin{equation}
\chi^2_P(X_1:X_2)\sim 2\ \KL(X_1:X_2).
\end{equation}
\end{corollary}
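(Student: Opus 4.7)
The plan is to obtain the corollary as an immediate specialization of Lemma~4: expand $f$ to second order about $\lambda=1$ and identify the three resulting terms. Writing $u=x_2(x)/x_1(x)$ and using $f(u)=f(1)+f'(1)(u-1)+\tfrac{1}{2}f''(1)(u-1)^2+R_2(u)$, integration against $x_1(x)\,\mathrm{d}\nu(x)$ produces
\begin{equation*}
I_f(X_1:X_2) = f(1) + f'(1)\,\chi^1_{1,P}(X_1:X_2) + \tfrac{1}{2} f''(1)\,\chi^2_{1,P}(X_1:X_2) + \text{remainder},
\end{equation*}
using the convention $\chi^0_{1,P}=1$ from Lemma~4.

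I would then collapse the first two terms. The constant term vanishes by the standing normalization $f(1)=0$ recalled after Eq.~\ref{eq:fdiv}. The linear term vanishes because, directly from the definition of $\chi^i_{\lambda,P}$ with $\lambda=1$,
\begin{equation*}
\chi^1_{1,P}(X_1:X_2) = \int \bigl(x_2(x)-x_1(x)\bigr)\,\mathrm{d}\nu(x) = 1-1 = 0,
\end{equation*}
since $x_1$ and $x_2$ are probability densities with respect to $\nu$. The remaining quadratic term is precisely the Pearson/Neyman $\chi^2$ appearing in the corollary, and solving for $\chi^2$ by inversion is legitimate because strict convexity of the generator forces $f''(1)>0$.

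For the Kullback--Leibler specialization I would compute $f''(1)$ for $f(u)=u\log u$: $f'(u)=1+\log u$ and $f''(u)=1/u$, hence $f''(1)=1$, so the generic approximation collapses to the stated $\chi^2_P(X_1:X_2)\sim 2\,\KL(X_1:X_2)$. The only non-bookkeeping step is making the ``$\sim$'' symbol rigorous; for that I would invoke the $s=2$ instance of the Taylor remainder bound in Eq.~\ref{eq:fapprox}, which is valid whenever the density ratio $x_2/x_1$ stays in a compact subinterval $[m,M]$ on which $f'''$ is bounded. Everything else follows mechanically from Lemma~4 once the two vanishing terms are identified, so this remainder control is really the sole (classical) obstacle.
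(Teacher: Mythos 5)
Your proposal is correct and follows essentially the same route as the paper: truncate the Taylor expansion of Lemma~4 at $\lambda=1$, kill the constant term by the normalization $f(1)=0$ and the linear term because $\chi^1_{1,P}=\int(x_2-x_1)\,\dnu=0$, and control the error via the $s=2$ instance of Eq.~\ref{eq:fapprox}. The only caveat (inherited from the paper's own loose $\chi^2_P$/$\chi^2_N$ labelling) is that the quadratic term produced by expanding $\int x_1 f(x_2/x_1)\,\dnu$ about $1$ is literally $\int (x_2-x_1)^2/x_1\,\dnu$, i.e.\ the Pearson form under the paper's conventions, so your identification of it with ``the $\chi^2$ appearing in the corollary'' is as precise as the statement itself allows.
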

For affine exponential families, we then plug the closed-form formula of Lemma 1 to get a simple approximation formula of $I_f$.
For example, consider the Jensen-Shannon divergence (Table~\ref{tab:fdiv}) with $f''(u)=\frac{1}{u}-\frac{1}{u+1}$ and $f''(1)=\frac{1}{2}$.
It follows that $I_{\mathrm{JS}}(X_1:X_2)\sim \frac{1}{4}\chi^2_N(X_1:X_2)$. (For Poisson distributions $\lambda_1=5$ and $\lambda_2=5.1$, we get $1.15\%$ relative error.


\subsection{Example 1: $\chi^2$ revisited}
Let us start with a sanity check for the $\chi^2$ distance between Poisson distributions. 
The Pearson chi square distance is a $f$-divergence for $f(t)=t^2-1$ with $f'(t)=2t$ and $f''(t)=2$ and $f^{(i)}(t)=0$ for $i>2$.
Thus, with $f^{(0)}(0)=-1$, $f^{(1)}(0)=0$, $f^{(2)}(0)=2$, and $f^{(i)}(0)=0$ for $i>2$.
Recall that  
$I_{1-i,i}(\theta_1:\theta_2) = e^{F(i\theta_2+(1-i)\theta_1)- (i F(\theta_2)+(1-i)F(\theta_1)}=\exp (\lambda_2^i \lambda_1^{1-i} - i\lambda_2 -(1-i)\lambda_1)$.
Note that $I_{1-i,i}(\lambda,\lambda)=e^0=1$ for all $i$.
Thus we get:
$I_f(X_1 : X_2) = - I_{1,0} + I_{-1,2}$ with $I_{1,0} = e^{\lambda_1-\lambda_1}=1$
and $I_{-1,2} = e^{\frac{\lambda_2^2}{\lambda_1} -2\lambda_2+\lambda_1}$.
Thus, we obtain $I_f(X_1 : X_2)= -1 + e^{\frac{\lambda_2^2}{\lambda_1} -2\lambda_2+\lambda_1}$,
in accordance with Eq.~\ref{eq:poi1}.

\subsection{Example 2: Kullback-Leibler divergence}
By choosing $f(u)=-\log u$, we obtain the Kullback-Leibler divergence (see Table~\ref{tab:fdiv}).
We have $f^{(i)}(u)={(-1)}^{i} (i-1)! u^{-i}$, and hence $\frac{f^{(i)}(1)}{i!}=\frac{{(-1)}^{i}}{i}$, for $i\geq 1$ (with $f(1)=0$).
Since $\chi^1_{1,P}=0$, it follows that:
\begin{equation}\label{eq:KLTaylor}
\KL(X_1:X_2)=\sum_{j=2}^\infty \frac{(-1)^i}{i}\ \chi^j_{P}(X_1:X_2).
\end{equation}

Note that for the case of   KL divergence between members of the same exponential families, the divergence can be expressed in a simpler closed-form using a Bregman divergence~\cite{2011-brbhat} on the swapped natural parameters.
For example, consider Poisson distributions with $\lambda_1=0.6$ and  $\lambda2=0.3$, the Kullback-Leibler divergence computed from
the equivalent Bregman divergence yields $\KL\sim 0.1158$, the stochastic evaluation of Eq.~\ref{eq:sto} with $n=10^6$ yields $\widehat{KL}\sim 0.1156$
and the KL divergence obtained from the truncation of Eq. \ref{eq:KLTaylor} to the first $s$ terms yields the following sequence:
$0.0809 (s=2)$, 
$0.0910 (s=3)$, 
$0.1017(s=4)$, 
$0.1135 (s=10)$,
$0.1150 (s=15)$, etc.

%

%
%

\section{Concluding remarks}
We investigated the calculation of statistical $f$-divergences between members of the same exponential family with affine natural space.
We first reported a generic closed-form formula for the Pearson/Neyman $\chi^2$ and Vajda $\chi^k$-type distance, and instantiated that formula for the Poisson and the isotropic Gaussian affine exponential families.
We then considered the Taylor expansion of the generator $f$ at any given point $\lambda$ to deduce an analytic expression of $f$-divergences using Pearson-Vajda-type distances.
A second-order Taylor approximation yielded a fast estimation of $f$-divergences.
This framework shall find potential applications in signal processing and when designing inequality bounds between divergences.

A Java\texttrademark{} package that illustrates numerically the lemmata is provided at:
\url{www.informationgeometry.org/fDivergence/}

\section*{Acknowledgments}
The authors would like to thank Professor Shun-ichi Amari for giving us his feedback and pointing out minor
careless mistakes in an early draft.



\end{document}